\definecolor{mygray}{rgb}{0.2, 0.2, 0.2}
\newtheorem{theorem}{Theorem}[section]
\newtheorem{lemma}[theorem]{Lemma}
\newtheorem{proposition}[theorem]{Proposition}
\theoremstyle{definition}
\newtheorem{definition}{Definition}[section]
\theoremstyle{remark}
\newtheorem{remark}{Remark}[section]
\newtheorem{example}{Example}[section]
\providecommand{\keywords}[1]{\textbf{\textbf{Keywords:}} #1}
\DeclareMathOperator{\Ric}{Ric}
\title{Notes on a conformal characterization of $2$-dimensional Lorentzian manifolds with constant Ricci scalar curvature}
\author{Nicol\`o Cangiotti{$^{1}$}, Mattia Sensi{$^{2}$}\\[1em]
\small $^1$University of Pavia, Department of Mathematics, via Ferrata 5, \\ \small 27100 Pavia (PV), Italy. Email: \texttt{nicolo.cangiotti@unipv.it}\\
 \small $^2$University of Trento, Department of Mathematics, via Sommarive 14,\\ \small 38123 Trento (TN), Italy. Email: \texttt{mattia.sensi@unitn.it}}
\date{}
\begin{document}

\maketitle

\begin{abstract} 
We present a characterization of $2$-dimensional Lorentzian manifolds with constant Ricci scalar curvature. It is well known that every $2$-dimensional Lorentzian manifolds is conformally flat, so we rewrite the Ricci scalar curvature in terms of the conformal factor and we study the solutions of the corresponding differential equations. Several remarkable examples are provided.
\bigskip

\keywords{Lorentzian manifolds, Minkowski space, conformal factor, $2$-dimensional manifolds, Ricci scalar curvature.}
\end{abstract}

\section{Introduction}
Pseudo-Riemannian manifolds have historically been of interdisciplinary interest, and they have been extensively studied in many fields, from differential geometry to mathematical physics. In particular, there is a class which, for many reasons, has assumed an importance comparable to the one of Riemannian manifolds. It is indeed well known that, in the general theory of relativity, the spacetime has historically been modeled as a $4$-dimensional Lorentzian manifold. Lorentzian manifolds represent a very interesting case study, which has been thoroughly investigated in mathematics \cite{Beem2017,Besse2007,Chen2011} as much as in physics \cite{HawkingEllis, Pfaffle2009, Wald}. And, as it often happens in geometry, there are many properties which depend completely on the dimension of the manifold.

This paper fits into the body of works that propose an analysis of the $2$-dimensional Lorentzian geometry (see, for instance, \cite{Burgos2013,Collas77,Grumiller2002,Papa19,Terek2016,Wald91}). Our idea is the following: starting from the fact that every $2$-dimensional Lorentzian manifold is conformally flat, we show a very simple way to construct Lorentzian manifolds with constant Ricci scalar curvature by deriving different expressions of the conformal factor. In particular, we provide explicit forms for the conformal factor, which is a solution of a classical partial differential equations. In the same way, we propose also several examples, some that recover very familiar Lorentzian manifolds, whereas other are provided to illustrate the potential of our method.

The paper is organized as follows. In Section 2, we recall some definitions and a classical result regarding $2$-dimensional Lorentzian manifolds. Section 3 is devoted to illustrate a way to construct Lorentzian manifolds starting from the conformal factor. Finally, we conclude with Section 4 by summarizing the ideas exhibited and by presenting possible further developments. 

\section{Definitions and basic concepts}
In this section, we recall some basic definitions and concepts regarding Lorentzian manifolds; we remark similarities with the Riemannian case. Moreover, we state a fundamental theorem on the conformal structure of the $2$-dimensional Loretzian manifolds.
\begin{definition}
A \emph{pseudo-Riemannian manifold} $\mathcal{M}$ is a differentiable manifold, equipped with an everywhere non-degenerate metric $g$. We denote a manifold provided with a specific metric with the couple $(\mathcal{M},g)$.
\end{definition}
It is well known that, by Sylvester's law of inertia, one can identify the metric with its signature \cite{Norman1986}. We restrict our study to a particular signature, which, as mentioned above, is very relevant in many physical context, especially in the field of the general theory of relativity. 
\begin{definition} A \emph{Lorentzian manifold} is a pseudo-Riemannian manifold $\mathcal{M}$, equipped with a metric $g$ with signature $(1,n-1)$, where $n$ is the dimension of $\mathcal{M}$.
\end{definition}
For our purpose, we shall consider the $2$-dimensional Lorentzian manifolds (it is appropriate to underline that the definitions below can easily be extended to the manifolds higher dimension). As does the Euclidean space in the case of Riemannian manifolds, a key role is played by the \emph{Minkowski} space.
\begin{definition}
\label{Minkowski}
Let us consider the following metric tensor $\eta$:
\[
\eta=\begin{pmatrix}
-1 & 0\\
0 & 1
\end{pmatrix}.
\]
A Lorentzian manifold $\mathcal{M}$, equipped with the metric $\eta$, is the \emph{Minkowski space}. The metric $\eta$ is called \emph{Minkowski metric}.
\end{definition}
\begin{remark}
Historically, the definition of Minkowski space in the literature refers to the $4$-dimensional space \cite{HawkingEllis}, but this little abuse of notation should not cause confusion, as we will work in the $2$-dimensional case, unless explicitly stated otherwise. 
\end{remark}
The next definition introduce the notion of \emph{flat manifold}, by using the very usual concept of \emph{Ricci scalar curvature}.
\begin{definition}
A manifold $(\mathcal{M},g)$ with a null Ricci scalar curvature is called \emph{flat manifold}. In this case, we shall call its metric \emph{flat metric}. A manifold $(\mathcal{M},\tilde{g})$ is \emph{conformally flat} if its metric can be expressed, via the composition with a smooth function $\Omega$, as
\[
\tilde{g}=\Omega \cdot g,
\]
where $g$ is a flat metric. 
\end{definition}
\begin{remark}
It is clear that the Minkowski space of Def. \ref{Minkowski} is a flat manifold.
\end{remark}
In the previous definitions, we implicitly express the metric in terms of a tensor (actually a matrix, in this case) as in Def. \ref{Minkowski}. However, it is possible to give a different, and very useful, characterization for the metric, namely the \emph{line element}.
\begin{definition}
Let us consider a metric $g_{ij}$, where we indicate rows and columns of the matrix associated to $g$ by the indices $i$ and $j$, respectively. By considering the general curvilinear coordinates, it is possible to define the line element $ds^2$ as:
\[
ds^2=g_{ij}dq^idq^j, \quad i,j=1,2.
\]
Here we are using the Einstein summation convention.
\end{definition}
\begin{remark}
For the metric $\eta$ of Def. \ref{Minkowski} the line metric for the coordinates $(t,x)$ could be expressed by
\[
ds^2=-dt^2+dx^2
\]
\end{remark}
The fundamental result which inspired this work is given by the following theorem, which gives a very powerful strategy to handle $2$-dimensional Lorentzian manifolds.
\begin{theorem}
\label{bigtheorem}
Every $2$-dimensional Lorentzian manifold $(\mathcal{M},g)$ is conformally flat.
\end{theorem}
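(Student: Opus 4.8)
The plan is to construct local null (lightcone) coordinates in which the metric takes a manifestly conformally flat form, and then patch the construction together. First I would work locally: around any point $p\in\mathcal{M}$, the metric $g$ has signature $(1,1)$, so at each point the lightcone splits the tangent space into two one-dimensional null directions. These give two smooth line fields (locally), and since line fields on a surface can be integrated (each is locally spanned by a nonvanishing vector field), I would pick coordinates $(u,v)$ adapted to them so that the coordinate curves $v=\mathrm{const}$ and $u=\mathrm{const}$ are null curves. In such coordinates the diagonal components $g_{uu}=g(\partial_u,\partial_u)$ and $g_{vv}=g(\partial_v,\partial_v)$ vanish identically (the coordinate vectors are null), leaving $ds^2 = 2F(u,v)\,du\,dv$ for some smooth nowhere-zero function $F$.

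Next I would symmetrize this into the standard flat form. Writing $2F\,du\,dv$ and introducing $u = \tfrac{1}{2}(t - x)$, $v = \tfrac{1}{2}(t + x)$ (or the analogous real linear change), one gets $ds^2 = \tfrac{F}{2}\,(-dt^2 + dx^2)$, i.e. $g = \Omega\,\eta$ with $\Omega = F/2$ the conformal factor and $\eta$ the Minkowski metric of Definition \ref{Minkowski}. One should check a sign/orientation point here: $F$ must be taken with a consistent sign so that $\Omega$ is of one sign and the signature is preserved; if $F<0$ on the relevant chart one swaps the roles of $t$ and $x$, or equivalently relabels the two null foliations. This establishes conformal flatness on a neighborhood of every point.

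The main obstacle is precisely the step I glossed over: showing that the two null line fields can actually be integrated to genuine coordinate foliations, and that the resulting coordinates are smooth. Locally a null line field need not be orientable, but on a small enough simply connected neighborhood one can choose a smooth nonvanishing null vector field spanning it, and then invoke the standard existence of local coordinates straightening a nonvanishing vector field (or, more symmetrically, the Frobenius / flow-box theorem applied to each of the two transverse null directions simultaneously, which works because they are everywhere linearly independent). Care is needed only to ensure the two families of integral curves are used as the two coordinate functions consistently; once that is done the computation of $g_{uu}=g_{vv}=0$ is immediate from the defining property of the coordinates.

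Finally, a remark on globality: the statement as phrased asserts conformal flatness, which in this context is a local notion (the existence of a flat metric $g$ and smooth $\Omega$ with $\tilde g=\Omega g$ locally, equivalently the vanishing of the relevant conformal curvature obstruction, which in dimension $2$ is automatic). I would state the conclusion as: every point has a neighborhood on which $g$ is conformal to $\eta$; the two null coordinates constructed above furnish such a chart explicitly, and this is exactly the form exploited in the next section to write the Ricci scalar curvature in terms of $\Omega$.
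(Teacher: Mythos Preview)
Your argument is correct and is the standard null-coordinate proof of conformal flatness in two Lorentzian dimensions. Note, however, that the paper does not actually prove this theorem at all: it simply cites the literature (``For the proof see, e.g., Theorem 7.2 in \cite{Schottenloher2008}''). So there is nothing to compare your approach against within the paper itself; you have supplied precisely the kind of argument the cited reference contains, namely integrating the two transverse null line fields to obtain local coordinates $(u,v)$ in which $g=2F\,du\,dv$, and then passing to $(t,x)$ to exhibit $g=\Omega\,\eta$. Your remarks on the sign of $F$, on the local (flow-box) integrability of the null directions, and on the local nature of the conclusion are all appropriate and address the only points where care is genuinely needed.
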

For the proof see, e.g., Theorem 7.2 in \cite{Schottenloher2008}. The next section is devoted to exploring the conformal factor in the case of manifolds with constant Ricci scalar curvature.

\section{Conformal characterization}
Firstly, we fix some important notation that we shall employ in this section. For the conformal factor $\Omega$, we shall make explicit the dependence on  variables, which for historically and interpretative reasons we identify with $(t,x)$.
\smallskip

Let us consider a generic $2$-dimensional Lorentzian manifold $(\mathcal{M},g)$, whose metric, thanks to Theorem \ref{bigtheorem}, can be written as
\begin{equation}
g=\Omega(t,x) \cdot \eta
\end{equation}
where $\Omega(t,x)>0$ is a smooth function and $\eta$ is the Minkowski metric defined in Def. \ref{Minkowski}. The Ricci scalar curvature $R$ can be expressed as a function of $\Omega$ and its partial derivatives by
\begin{equation}
\label{RicciOmega}
R=\frac{-\left ( \partial_t\Omega(t,x)\right )^2+\left (\partial_x\Omega(t,x) \right )^2+\Omega(t,x) \left [  \partial_t^2\Omega(t,x) - \partial_x^2\Omega(t,x) \right ]}{\left ( \Omega(t,x) \right)^3}.  
\end{equation}
If we restrict ourselves to a traditional form for $\Omega$, namely
\[
\Omega(t,x)=e^{\omega(t,x)},
\]
then Eq. \eqref{RicciOmega} becomes:
\begin{equation}
\label{RicciExp}
R= \left( {\partial_t^2}\omega(t,x) - {\partial_x^2}\omega(t,x) \right ) \mathrm{e}^{-\omega(t,x)}.
\end{equation}
Now we point out a interesting properties of the $2$-dimensional Lorentzian manifolds.
\begin{definition}
Let $(\mathcal{M},g)$ be a $2$-dmensional Lorentzian manifold. We call it a \emph{Einstein manifold} if the following condition holds:
\begin{equation}
\label{EinsCond}
\Ric=\kappa \cdot g,
\end{equation}
for some constant $\kappa \in \mathbb{R}$, where $\Ric$ denotes the Ricci curvature tensor (here we are using the notation without indices as for the metric).
\end{definition}
\begin{proposition}
A $2$-dimensional Lorentzian manifold is Einstein if and only if it has a constant Ricci scalar curvature $R$. Moreover the constant $\kappa$ is equal to $\frac{R}{2}$.
\end{proposition}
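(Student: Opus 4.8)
The plan is to reduce everything to the classical fact that, on \emph{any} $2$-dimensional pseudo-Riemannian manifold (in particular in Lorentzian signature), the Ricci tensor is pointwise proportional to the metric, with proportionality factor exactly $R/2$. Once this identity $\Ric = \tfrac{R}{2}\,g$ is available, both implications of the proposition, as well as the value of $\kappa$, follow from a one-line trace computation.

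First I would establish the tensorial identity $\Ric=\tfrac{R}{2}\,g$, valid at every point and \emph{independently} of whether $R$ is constant. In dimension $2$ the Riemann curvature tensor has a single algebraically independent component, so $R_{ijkl}=\tfrac{R}{2}\bigl(g_{ik}g_{jl}-g_{il}g_{jk}\bigr)$; contracting with $g^{ik}$ gives $\Ric_{jl}=\tfrac{R}{2}\,g_{jl}$. Alternatively, and more in the spirit of this note, I would verify the identity directly in the conformal coordinates furnished by Theorem~\ref{bigtheorem}: writing $g=\Omega(t,x)\cdot\eta$, compute the Christoffel symbols of $\Omega\cdot\eta$, then the components $\Ric_{tt}$, $\Ric_{tx}$, $\Ric_{xx}$, and check that each equals $\tfrac{R}{2}$ times the corresponding component of $g$, with $R$ given by \eqref{RicciOmega} (equivalently \eqref{RicciExp} after the substitution $\Omega=e^{\omega}$). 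This is a routine, if slightly lengthy, calculation; the off-diagonal component vanishes and the diagonal ones collapse to multiples of $\Omega\,\eta_{ij}$ once the combinations $\partial_t^2-\partial_x^2$ are recognized.

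With this identity in hand the equivalence is immediate. If $(\mathcal{M},g)$ has constant Ricci scalar curvature $R$, then $\Ric=\tfrac{R}{2}\,g$ with $\tfrac{R}{2}\in\mathbb{R}$ a constant, so condition \eqref{EinsCond} holds with $\kappa=\tfrac{R}{2}$ and $\mathcal{M}$ is Einstein. Conversely, if $\mathcal{M}$ is Einstein, then $\Ric=\kappa\cdot g$ for some constant $\kappa$; taking the trace with respect to $g$ yields $R=g^{ij}\Ric_{ij}=\kappa\,g^{ij}g_{ij}=2\kappa$, so $R=2\kappa$ is constant and $\kappa=\tfrac{R}{2}$ (here one uses connectedness of $\mathcal{M}$, or argues componentwise). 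This simultaneously proves the "only if" direction and pins down the constant.

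The only genuine obstacle is the dimension-$2$ identity $\Ric=\tfrac{R}{2}\,g$ itself; everything else is the trace argument above. If one wants the note self-contained rather than citing the standard fact about surfaces, the cleanest route is the explicit conformal computation sketched in the second paragraph, which fits naturally alongside the derivation of \eqref{RicciOmega} already carried out in this section.
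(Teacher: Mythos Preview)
Your proposal is correct. The paper's own proof takes precisely your second, computational route: it writes out the Ricci tensor explicitly in the conformal coordinates $g=e^{\omega}\eta$, observes that it is diagonal with entries $\pm\tfrac{1}{2}(\partial_x^2\omega-\partial_t^2\omega)$, and then compares the Einstein condition $\Ric=\kappa\,g$ directly against the expression \eqref{RicciExp} for $R$. What the paper does \emph{not} do is isolate the pointwise identity $\Ric=\tfrac{R}{2}\,g$, valid in dimension $2$ regardless of whether $R$ is constant; you make this identity the centerpiece and then obtain both implications and the value of $\kappa$ from a one-line trace. Your organization is cleaner and more conceptual---it explains \emph{why} the result holds (in two dimensions the Ricci tensor carries no information beyond the scalar curvature) rather than verifying it as a coincidence of two PDEs---while the paper's version stays hands-on and matches its overall computational style. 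Both arguments are sound.
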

\begin{proof}
It is not difficult to compute the Ricci curvature tensor in terms of $\Omega=e^{\omega(t,x)}$. We get:
\[
\begin{pmatrix}
 \frac{1}{2} \left(\Omega ^{(0,2)}(t,x)-\Omega ^{(2,0)}(t,x)\right) & 0 \\
 0 & \frac{1}{2} \left(\Omega ^{(2,0)}(t,x)-\Omega ^{(0,2)}(t,x)\right)
\end{pmatrix}.
\]
Now we see that the condition given by Eq. \eqref{EinsCond} holds if and only if
\[
2\kappa e^{\omega(t,x)}=\left (-\partial_t^2\omega(t,x)+\partial^2_x\omega(t,x) \right ).
\]
Thanks to Eq. \eqref{RicciExp} we conclude the proof.
\end{proof}
The conformal characterization of the Ricci scalar curvature given by Eq. \eqref{RicciExp} provide a very simple way to construct Lorentzian manifolds with constant curvature. The first step is to obtain the general form of a flat manifold. It is trivial how this is strictly connected with the classical change of variables as we shall see in Example \ref{Ex1}.
\begin{lemma}
\label{Lemma1}
The conformal factor $\Omega(t,x)$ of a generic $2$-dimensional flat Lorentz manifold (i.e. with $R=0$) can be written as
\begin{equation}
\label{omegaflat}
\Omega(t,x)=\phi(x+t)\cdot\psi(x-t).
\end{equation}
\end{lemma}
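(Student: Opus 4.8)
The plan is to reduce the flatness equation to the one-dimensional wave equation and then invoke d'Alembert's representation of its general solution. Writing $\Omega=e^{\omega}$ as in the excerpt, Eq.~\eqref{RicciExp} shows that $R=0$ is equivalent to $\partial_t^2\omega-\partial_x^2\omega=0$, since the factor $e^{-\omega}$ never vanishes. Thus the statement is equivalent to describing all smooth solutions $\omega$ of the wave equation, and positivity of $\Omega$ is exactly what makes the substitution $\Omega=e^{\omega}$ legitimate, so no generality is lost.

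Next I would pass to null (light-cone) coordinates $u=x+t$, $v=x-t$. A direct computation gives $\partial_t=\partial_u-\partial_v$ and $\partial_x=\partial_u+\partial_v$, hence $\partial_t^2-\partial_x^2=-4\,\partial_u\partial_v$. Therefore the flatness condition becomes $\partial_u\partial_v\omega=0$. Integrating once in $u$ shows that $\partial_v\omega$ depends on $v$ alone, and integrating again yields $\omega(u,v)=f(u)+g(v)$ for some smooth one-variable functions $f,g$; conversely, any such $\omega$ solves $\partial_u\partial_v\omega=0$.

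Finally, exponentiating gives $\Omega=e^{\omega}=e^{f(u)}e^{g(v)}=:\phi(u)\,\psi(v)=\phi(x+t)\,\psi(x-t)$ with $\phi,\psi>0$ smooth, which is precisely Eq.~\eqref{omegaflat}. The converse direction (any product of this form is the conformal factor of a flat manifold) follows by running the same chain of equivalences backwards, setting $\omega=\ln\phi(x+t)+\ln\psi(x-t)$ and checking $\partial_t^2\omega-\partial_x^2\omega=0$.

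The computation is entirely routine; the only point deserving a word of care is the passage from $\partial_u\partial_v\omega=0$ to the separated form $\omega=f(u)+g(v)$, which relies on the antiderivatives being globally defined — unproblematic when the $(t,x)$-domain is all of $\mathbb{R}^2$ (or any product of intervals), but on a more general, non-simply-connected domain one obtains the representation only locally. I would state the lemma (as the authors do) for the generic global situation and mention this caveat in passing.
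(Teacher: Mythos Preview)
Your proof is correct and follows essentially the same route as the paper: reduce $R=0$ via Eq.~\eqref{RicciExp} to the wave equation $\partial_t^2\omega-\partial_x^2\omega=0$, obtain the d'Alembert form $\omega=f(x+t)+g(x-t)$, and exponentiate. The only difference is that you spell out the derivation in null coordinates (and add the local-versus-global caveat), whereas the paper simply cites the known general solution of the wave equation.
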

\begin{proof}
Let we fix $R=0$. Thus, Eq. \eqref{RicciExp} leads to the one-dimensional wave equation. It is well known (see, e.g. case 8 of Sect. 4.1.3 in \cite{Polyanin2004}) that its solutions can be expressed by the sum of a right traveling function $\hat{\phi}$ and a left traveling function $\hat{\psi}$. This means we can write:
\begin{equation*}
    \omega(t,x)=\hat{\phi}(x+t)+\hat{\psi}(x-t),
\end{equation*}
and so
\begin{align*}
    \Omega(t,x)&=e^{\omega(t,x)}=\exp\left [ \hat{\phi}(x+t)+\hat{\psi}(x-t)\right ]= \\
    &=\exp\left [ \hat{\phi}(x+t)\right ]\cdot \exp\left[ \hat{\psi}(x-t) \right]={\phi}(x+t)\cdot{\psi}(x-t).
\end{align*}
\end{proof}
\begin{remark}
\label{Rmk31}
It is remarkable to notice that, setting $u=x+t$ and $v=x-t$, we have actually switched to the \emph{null coordinates}. It is trivial to prove that, starting with the metric expressed in null coordinates, namely $ds^2=dudv$, we get the same result of Lemma \ref{Lemma1}.
\end{remark}
\begin{example}
\label{Ex1}
Let we consider, for instance, the usual conformal transformation of the Minkowski space, which leads to the \emph{Penrose-Carter diagram} (see, for instance, Chpt. 5 in \cite{HawkingEllis}). It is easy to see, starting from the null coordinates\footnote{The trick is to set $\mu=\arctan(u)$ and $\nu=\arctan(v)$, with $-\frac{\pi}{2}\le \mu \le \frac{\pi}{2}$ and $-\frac{\pi}{2}\le \nu \le \frac{\pi}{2}$.}, that the conformal factor could be written in terms of $t$ and $x$ as
\[
\Omega(t,x)=\left( \cos(t)+\cos(x)\right)^{-2}=\frac{1}{4} \cos^{-2} \left(\frac{x+t}{2}\right) \cos^{-2} \left(\frac{x-t}{2}\right),
\]
where now the range of $t$ and $x$ is given by:
\[
-\pi < x < \pi \quad \text{and} \quad -\pi+|x| < t < \pi-|x|.
\]
Thus, we recover, as expected, the form of Eq. \eqref{omegaflat}.
\end{example}
\begin{figure}[ht!]
    \centering
    \begin{minipage}[b]{0.45\textwidth}
            \centering
    \begin{tikzpicture}
        \node at (0,0){\includegraphics[width=0.95\textwidth]{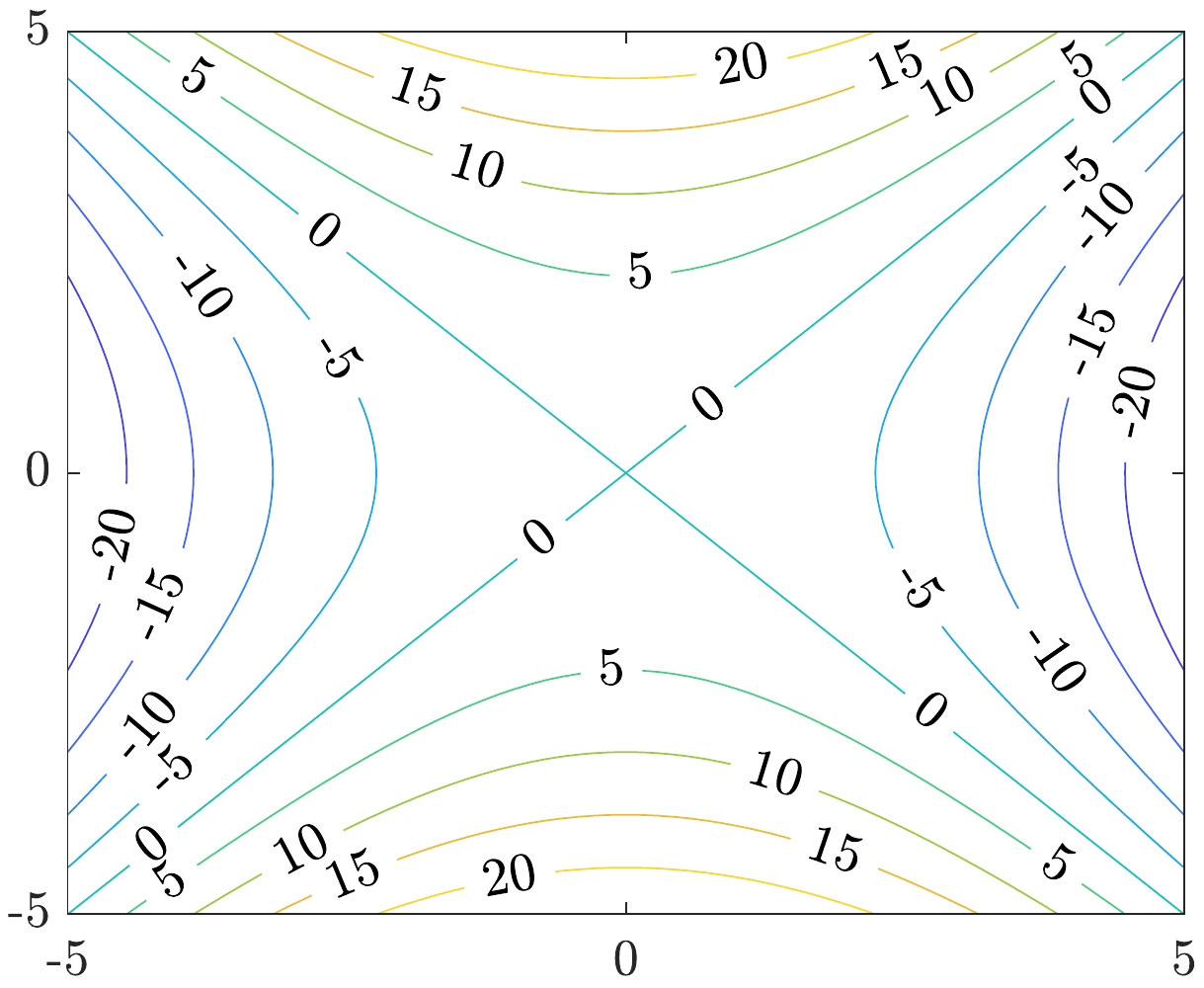}};
        \node at (0,2.8){$-t^2+x^2=s^2$};
        \node at (0,-2.8){$x$};
        \node at (-3.2,0){$t$};
     \end{tikzpicture}
  \caption*{(a) Minkowski space in classic coordinates.}
    \end{minipage}\hfill
    \begin{minipage}[b]{0.45\textwidth}
        \centering
            \begin{tikzpicture}
        \node at (0,0){\includegraphics[width=0.95\textwidth]{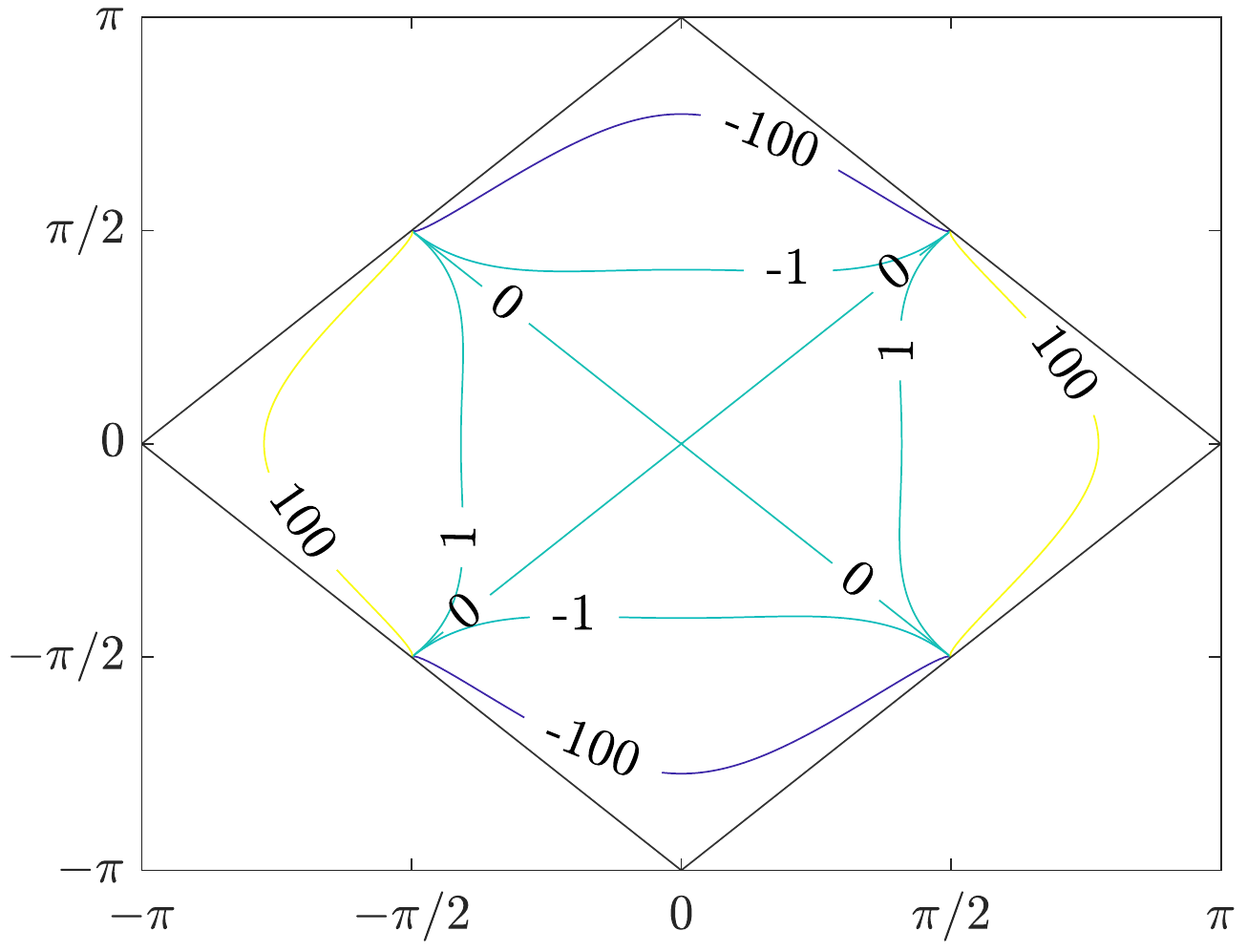}};
        \node at (0,2.7){$(\cos t+\cos x)^{-2}(-t^2+x^2)=s^2$};
        \node at (0,-2.5){$x$};
        \node at (-3,0){$t$};
        \node at (1.5,1.6){\color{mygray} $\infty$};
        \node at (-1.2,1.4){\color{mygray} $\infty$};
        \node at (1.5,-1.3){\color{mygray} $\infty$};
        \node at (-1.2,-1.1){\color{mygray} $\infty$};
     \end{tikzpicture}
        {\caption*{(b) Penrose diagram of Minkowski space.}}
    \end{minipage}
  \caption{Here we provide a comparison between the Minkowski space in normal coordinates (a) and its conformal representation as Penrose-Carter diagram (b). The figures present the locus of point with the same \emph{space interval} $s^2$ (also known in literature as \emph{spacetime interval}), for several different values. We notice that the compactification in (b) produces the classical ``diamond shape'', where the edges stand for the infinity. }
\end{figure}

Now we shift the focus on those conformal factors that depend only of one variable. We shall show (Example \ref{DeSitter}) how the general solution lead to a well known manifold: the de Sitter space.

\begin{lemma}
Let us assume $R\neq 0$. The space-independent conformal factor $\Omega(t)$ of a generic $2$-dimensional Lorentz manifold can be written as
\begin{equation}
\label{timeomega}
\Omega(t)=\frac{c_1\left (-1+\tanh^2\left ( \frac{1}{2} \sqrt{c_1(t+c_2)^2}\right ) \right)}{2R}.
\end{equation}
with $c_1$ and $c_2$ arbitrary constants.
\smallskip

Similarly, the time-independent conformal factor $\Omega(x)$ of a generic $2$-dimensional Lorentz manifold can be written as
\begin{equation}
\label{spaceomega}
\Omega(x)=\frac{d_1\left (1-\tanh^2\left ( \frac{1}{2} \sqrt{d_1(x+d_2)^2}\right ) \right)}{2R},
\end{equation}
with $d_1$ and $d_2$ arbitrary constants.
\end{lemma}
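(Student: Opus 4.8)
The plan is to reduce the PDE in Eq.~\eqref{RicciExp} to a solvable ODE by imposing the stated symmetry, and then integrate. For the space-independent case we set $\partial_x \omega = 0$, so that $\omega = \omega(t)$ and Eq.~\eqref{RicciExp} collapses to the autonomous second-order ODE
\begin{equation*}
\omega''(t) = R\, e^{\omega(t)},
\end{equation*}
since the $\partial_x^2\omega$ term vanishes. This is the Liouville-type equation in one variable. I would solve it by the standard trick of multiplying by $\omega'(t)$ and recognizing an exact derivative: $\omega'\omega'' = \tfrac12 \tfrac{d}{dt}\bigl((\omega')^2\bigr)$ and $R e^{\omega}\omega' = R\tfrac{d}{dt}e^{\omega}$, which integrates to the first integral $(\omega')^2 = 2R e^{\omega} + c_1$ for an arbitrary constant $c_1$. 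Separating variables gives $\int \frac{d\omega}{\sqrt{c_1 + 2R e^{\omega}}} = \pm(t + c_2)$, and the substitution $e^{\omega} = -\tfrac{c_1}{2R}\,\mathrm{sech}^2(\cdot)$ (or equivalently a $\tanh$ substitution) evaluates the integral in closed form. Unwinding, $\Omega = e^{\omega}$ is exactly the expression \eqref{timeomega}; one checks the constants match by differentiating $\tanh^2$ and using $1 - \tanh^2 = \mathrm{sech}^2$.

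The time-independent case is completely parallel: setting $\partial_t\omega = 0$ turns Eq.~\eqref{RicciExp} into $-\omega''(x) = R\, e^{\omega(x)}$, i.e.\ the same equation with $R \mapsto -R$. Running the identical first-integral and separation-of-variables argument yields $(\omega')^2 = -2R e^{\omega} + d_1$ and then the closed form \eqref{spaceomega}; the sign flip in front of the $\tanh^2$ term between \eqref{timeomega} and \eqref{spaceomega} is precisely the effect of $R \mapsto -R$ in the Lorentzian signature. So it suffices to carry out the computation once and quote the other by symmetry.

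The steps, in order, would be: (i) impose the symmetry and reduce to the ODE; (ii) obtain the first integral via the integrating factor $\omega'$; (iii) separate variables and perform the explicit $\mathrm{sech}^2$/$\tanh$ substitution; (iv) exponentiate to recover $\Omega$ and verify the constants against \eqref{timeomega}–\eqref{spaceomega}; (v) note that the $x$-dependent case follows by replacing $R$ with $-R$. A small amount of care is needed here about signs and the branch of the square root: the argument of $\tanh$ in \eqref{timeomega} is written as $\tfrac12\sqrt{c_1(t+c_2)^2}$, which is really $\tfrac12\sqrt{c_1}\,|t+c_2|$, and for $\Omega$ to be positive and real the sign of $c_1$ must be coordinated with the sign of $R$ (for $R<0$ one wants $c_1>0$, using $\tanh^2 - 1 < 0$). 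I would remark that the formula is a family of local solutions and that positivity of $\Omega$ restricts the admissible ranges of the constants and of $t$ (resp.\ $x$).

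The main obstacle is not conceptual but bookkeeping: matching the constants of integration produced by the separation-of-variables step to the particular normalization $c_1, c_2$ (resp.\ $d_1, d_2$) appearing in \eqref{timeomega}–\eqref{spaceomega}, and making sure the $\pm$ ambiguity and the absolute value inside the square root are handled so that the stated expression is genuinely a solution on an appropriate domain. Everything else is a routine ODE integration.
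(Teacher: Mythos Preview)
Your proposal is correct and follows essentially the same route as the paper: reduce Eq.~\eqref{RicciExp} to the ODE $\omega''(t)=R\,e^{\omega(t)}$ under the symmetry assumption, integrate, and note that the other case is analogous. In fact you supply considerably more detail than the paper, which simply writes ``after some computations'' and records the answer; your first-integral step and the $R\mapsto -R$ observation for the spatial case make the argument explicit, and your remarks on signs and positivity of $\Omega$ address issues the paper leaves implicit.
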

\begin{proof}
Here we just prove the space-independent case; the proof of the time-independent case is analogous. Assuming $\omega$ to be independent of $x$, Eq. \eqref{RicciExp} becomes
\begin{equation}
    \label{xindep}
\frac{\partial^2}{\partial t^2}\omega(t)=R\mathrm{e}^{\omega(t)}.
\end{equation}
After some computations, we obtain the solutions to the second order nonlinear ODE (\ref{xindep}): 
\begin{equation*}
    \omega(t)=\log\left( \frac{c_1\left (-1+\tanh^2\left ( \frac{1}{2} \sqrt{c_1(t+c_2)^2}\right ) \right)}{2R}\right),
\end{equation*}
where $c_1,c_2 \in \mathbb{R}$. This concludes the proof. 
\end{proof}
\begin{example}
\label{DeSitter}
Let us consider Eq. \eqref{timeomega}. Setting $c_1=-4$, $c_2=0$ and $R=2$, we get:
\begin{equation*}
\Omega(t)=1-\tanh^2(\sqrt{-t^2})=\sec^2(t),
\end{equation*}
which, for the range 
\[
\quad -\frac{\pi}{2}<t<\frac{\pi}{2} \quad \text{and} \quad 0\le x<2\pi,
\]
is exactly the conformal factor defining the  $2$-dimensional \emph{de Sitter space} \cite{Coxeter43} in global coordinates (i.e. $ds^2=\sec^2(t)(-dt^2+dx^2)$).
\end{example}
\begin{figure}[ht!]
    \centering
    \begin{tikzpicture}
        \node at (0,0){\includegraphics[scale=0.475]{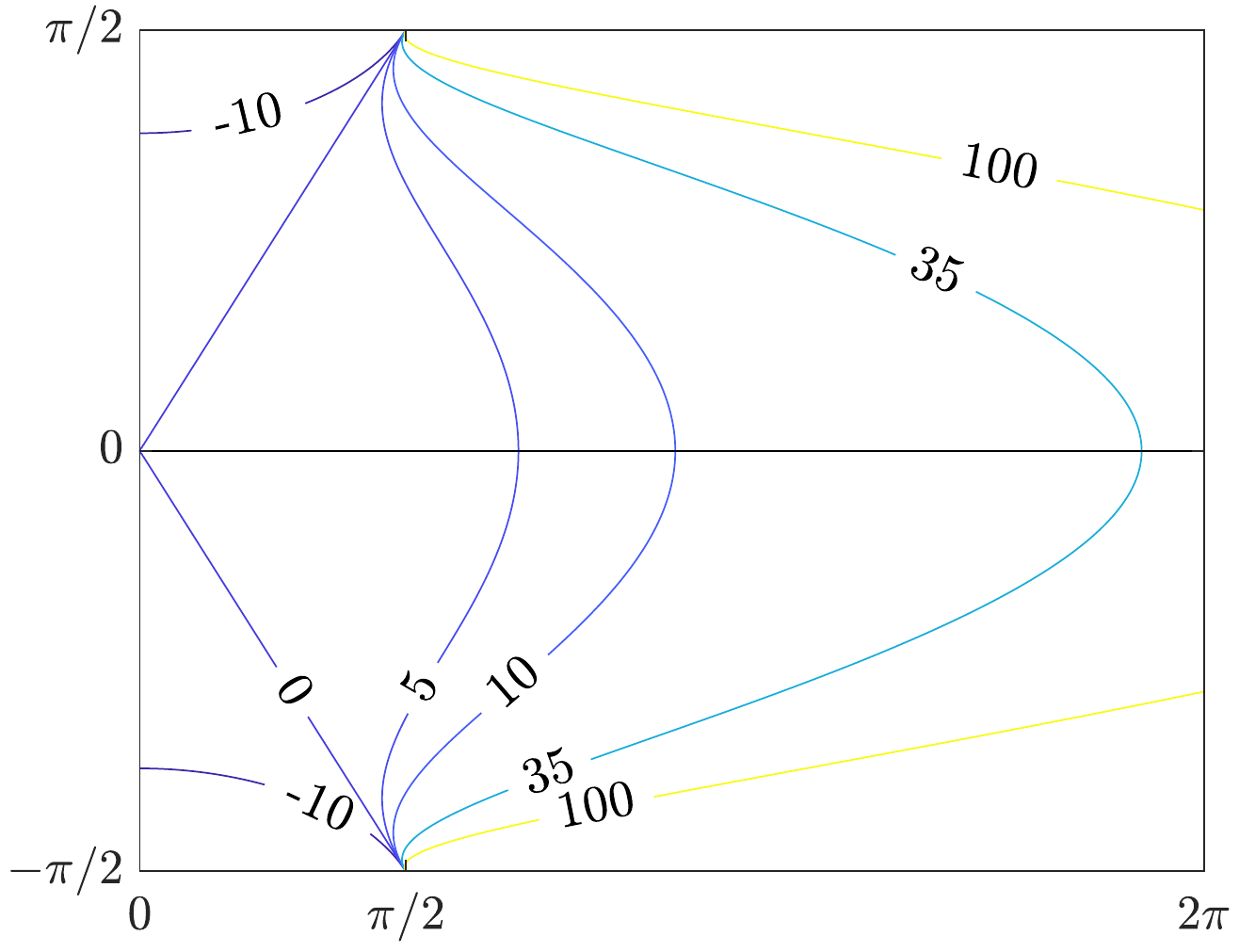}};
        \node at (0,2.6){$\sec^2 t(-t^2+x^2)=s^2$};
        \node at (0,-2.5){$x$};
        \node at (-2.8,0){$t$};
     \end{tikzpicture}
    \caption{Penrose-Carter diagram of the $2$-dimensional de Sitter space in global coordinates.}
    \label{DeSitterFig}
\end{figure}
Finally we have the general case, which allow us to construct several interesting manifolds and their Penrose- diagram as we present in Example \ref{Ex33}.
\begin{theorem}
\label{GeneralThm}
The general solution of Eq. \eqref{RicciExp} is given by
\begin{gather}
\omega(t,x) =\phi(x+t)+\psi(x-t)-2\log \left | k\int^{x+t}_{} e^{\phi(\lambda)}\textnormal{d}\lambda -\frac{R}{8k} \int^{x-t}_{} e^{\psi(\lambda)} \textnormal{d}\lambda+C \right  |,
\end{gather}
which, in the original $\Omega$ notation, becomes
\begin{gather}
\label{GeneralOmega}
\Omega(t,x) =\exp \left [\phi(x+t) \right ]\cdot\exp \left [\psi(x-t)\right ]\cdot \left ( k\int^{x+t}_{} e^{\phi(\lambda)}\textnormal{d}\lambda -\frac{R}{8k} \int^{x-t}_{} e^{\psi(\lambda)} \textnormal{d}\lambda + C\right )^{-2},
\end{gather}
where $\phi$ and $\psi$ arbitrary (smooth) functions, $k\neq 0$ and $C$ is an arbitrary real constants. With $\int ^{x}f(\lambda )d\lambda$  we denote the integration of $f(\lambda )$ with respect to $\lambda$  and then the substitution $\lambda=x$.
\end{theorem}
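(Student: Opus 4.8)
The plan is to reduce the semilinear PDE \eqref{RicciExp}, namely $\omega_{tt}-\omega_{xx}=Re^{\omega}$, to the classical Liouville equation by passing to null coordinates, and then to integrate it in the standard way. First I would introduce $u=x+t$ and $v=x-t$ (as already flagged in Remark \ref{Rmk31}); under this change of variables one computes $\partial_t^2-\partial_x^2=-4\,\partial_u\partial_v$, so the equation becomes $\partial_u\partial_v\,\omega=-\tfrac{R}{4}e^{\omega}$. This is the Liouville equation, whose general solution is known in closed form: $e^{\omega}=\dfrac{8}{|R|}\dfrac{F'(u)G'(v)}{\bigl(F(u)+\operatorname{sgn}(R)\,G(v)\bigr)^{2}}$ (up to the usual normalizations) for arbitrary functions $F,G$ with nonvanishing derivatives. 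I would record this as the key classical input, citing it the way the paper cites \cite{Polyanin2004} for the wave equation.

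Next I would match this general Liouville solution to the form claimed in the statement. Writing $F'(u)=e^{\phi(u)}$ and $G'(v)$ proportional to $e^{\psi(v)}$ — which is legitimate since $F',G'$ can be taken of one sign, absorbing signs into the constants — gives $F(u)=\int^{u}e^{\phi(\lambda)}\,\mathrm{d}\lambda$ and similarly for $G$, up to additive constants that can be lumped into a single constant $C$. Tracking the multiplicative constants, the factor $8/|R|$ together with the freedom in scaling $F$ and $G$ (replace $F$ by $kF$, $G$ by $G/$ something) produces exactly the coefficients $k$ and $-R/(8k)$ appearing inside the parenthesis of \eqref{GeneralOmega}. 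Taking logarithms then yields the stated expression for $\omega$, and exponentiating back recovers \eqref{GeneralOmega}; I would remark that when $R=0$ the formula degenerates consistently to Lemma \ref{Lemma1}.

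Finally, for completeness I would verify directly that the proposed $\omega$ solves $\partial_u\partial_v\omega=-\tfrac R4 e^{\omega}$: differentiating $\omega=\phi(u)+\psi(v)-2\log|kF(u)-\tfrac{R}{8k}G(v)+C|$ once in $u$ kills the $\psi$ term and gives $\phi'(u)-\dfrac{2kF'(u)}{kF-\frac{R}{8k}G+C}=\phi'(u)-\dfrac{2ke^{\phi(u)}}{kF-\frac{R}{8k}G+C}$; differentiating that in $v$ leaves only the term from the denominator, namely $\dfrac{2ke^{\phi(u)}\cdot(-\tfrac{R}{8k})e^{\psi(v)}}{(kF-\frac{R}{8k}G+C)^2}=-\dfrac{R}{4}\cdot\dfrac{e^{\phi(u)}e^{\psi(v)}}{(kF-\frac{R}{8k}G+C)^2}=-\dfrac R4 e^{\omega}$, as required; one also notes that $\phi,\psi$ arbitrary smooth gives the full solution space by the Liouville classification.

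The main obstacle is not the verification, which is a routine chain-rule computation, but the bookkeeping in the second step: one must be careful about signs (the $\operatorname{sgn}(R)$ inside the square, the absolute value in $-2\log|\cdot|$, and the sign conventions for Ricci scalar curvature) and about how the scaling freedom $F\mapsto aF+b$, $G\mapsto aG+b'$ of the Liouville solution gets distributed among the three constants $k$, $C$ and the (suppressed) lower limits of integration, so that the coefficient of the $G$-integral comes out as precisely $-R/(8k)$ and not some other equivalent normalization.
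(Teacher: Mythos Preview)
Your proposal is correct, and in fact it supplies more than the paper does: the paper gives no proof of this theorem at all, but simply states the result and then remarks that equation \eqref{RicciExp} is the modified (generalized) Liouville equation, referring the reader to \cite{Liouville1853, Bullough1980, Polyanin2004} for details. Your argument---pass to null coordinates $u=x+t$, $v=x-t$ so that the equation becomes $\omega_{uv}=-\tfrac{R}{4}e^{\omega}$, invoke the classical Liouville general solution, and then reparametrize $F,G$ via $F'=e^{\phi}$, $G'\propto e^{\psi}$ to land on the stated normalization---is exactly the standard route those references contain, so you are not taking a different approach so much as writing out what the paper leaves implicit. The direct verification you give in your third step is a clean way to sidestep the sign and scaling bookkeeping you rightly flag as the delicate point; once that computation checks (and it does: $\omega_{uv}=-\tfrac{R}{4}e^{\omega}$ falls out as you wrote), the only remaining content is the claim that \emph{every} solution is of this form, which is the Liouville classification you cite.
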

In literature, equation \eqref{RicciExp} is called \emph{modified} (or \emph{generalized}) \emph{Liouville equation}, and it is deeply studied in many works (for further details, see \cite{Liouville1853, Bullough1980,  Polyanin2004}).
\begin{remark}
\label{Rmk32}
As in Remark \ref{Rmk31}, we can rewrite Eq. \eqref{GeneralOmega} using null coordinates $(u,v)$ (with $ds^2=dudv$). So we get
\begin{equation}
\label{GeneralNull}
\Omega(u,v)=e^{\phi(u)}\cdot e^{\psi(v)}\left ( k\int^u e^{\phi(\lambda)} \textnormal{d}\lambda  -\frac{R}{8k}\int^v e^{\psi(\lambda)} \textnormal{d}\lambda  +C\right)^{-2}
\end{equation}
with $\phi,\psi$ and $C,k$ as in Theorem \ref{GeneralThm}.
\end{remark}
\begin{example}
\label{Ex33}
Let us consider Eq. \eqref{GeneralOmega} and let $\phi$ and $\psi$ be the identity function. We also set $k=1$, $R=2$ and $C=0$. So we obtain the following conformal factor in standard coordinates
\begin{equation*}
    \Omega_1(t,x)=e^{2x}\left (e^{x+t}-\frac{1}{4}e^{x-t} \right )^{-2},
\end{equation*}
which defines a metric of scalar curvature $R=2$; see Figure \ref{Figure_3}(a) for a visualization. By using Remark \ref{Rmk32} and Eq. \eqref{GeneralNull}, it is possible to provide the Penrose- diagram (see  Example \ref{Ex1}) as one can see in Figure \ref{Figure_3}. In this case the conformal factor is given by:
\[
\Omega_2(t,x)=\frac{\exp \left(\tan \left(\frac{t+x}{2}\right)\right) \exp \left(\tan \left(\frac{x-t}{2}\right)\right)}{\left(2 \cos \left(\frac{t+x}{2}\right) \cos \left(\frac{x-t}{2}\right)\right)^2 \left(\exp\left(\tan \left(\frac{t+x}{2}\right)\right)-\frac{1}{4} \exp\left(\tan \left(\frac{x-t}{2}\right)\right )\right)^2},
\]
where the range of $t$ and $x$ is:
\[
-\pi < x < \pi \quad \text{and} \quad -\pi+|x| < t < \pi-|x|.
\]
See Figure \ref{Figure_3}(b) for a useful visualization.
\end{example}

\begin{figure}[ht!]
    \centering
    \centering
    \begin{minipage}[b]{0.45\textwidth}
            \centering
    \begin{tikzpicture}
        \node at (0,0){\includegraphics[width=0.95\textwidth]{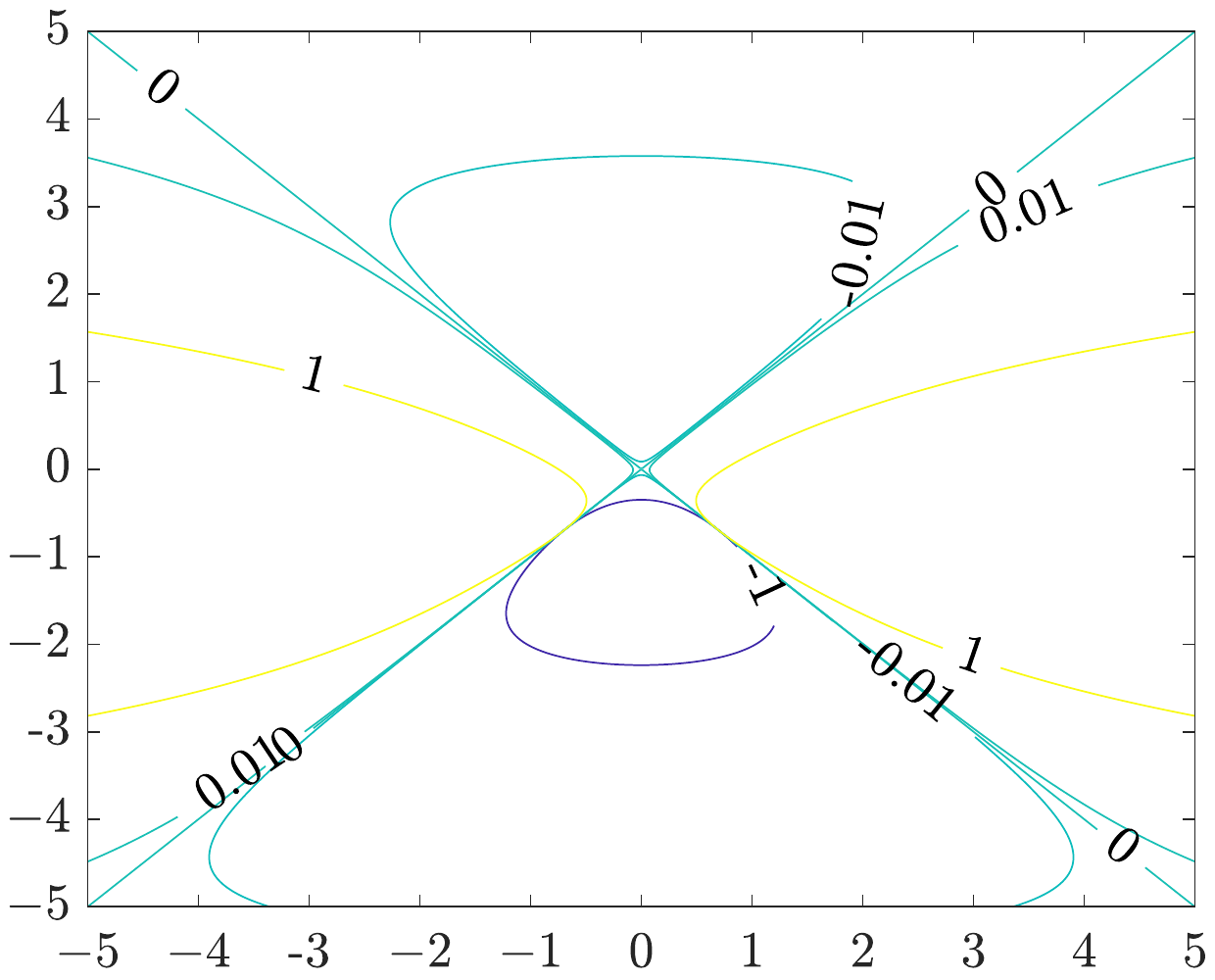}};
        \node at (0,2.8){$\Omega_1(t,x)(-t^2+x^2)=s^2$};
        \node at (0,-2.8){$x$};
        \node at (-3.2,0){$t$};
     \end{tikzpicture}
  \caption*{(a) Space of Example \ref{Ex33} in classic coordinates.}
    \end{minipage}\hfill
    \begin{minipage}[b]{0.45\textwidth}
        \centering
            \begin{tikzpicture}
        \node at (0,0){\includegraphics[width=0.95\textwidth]{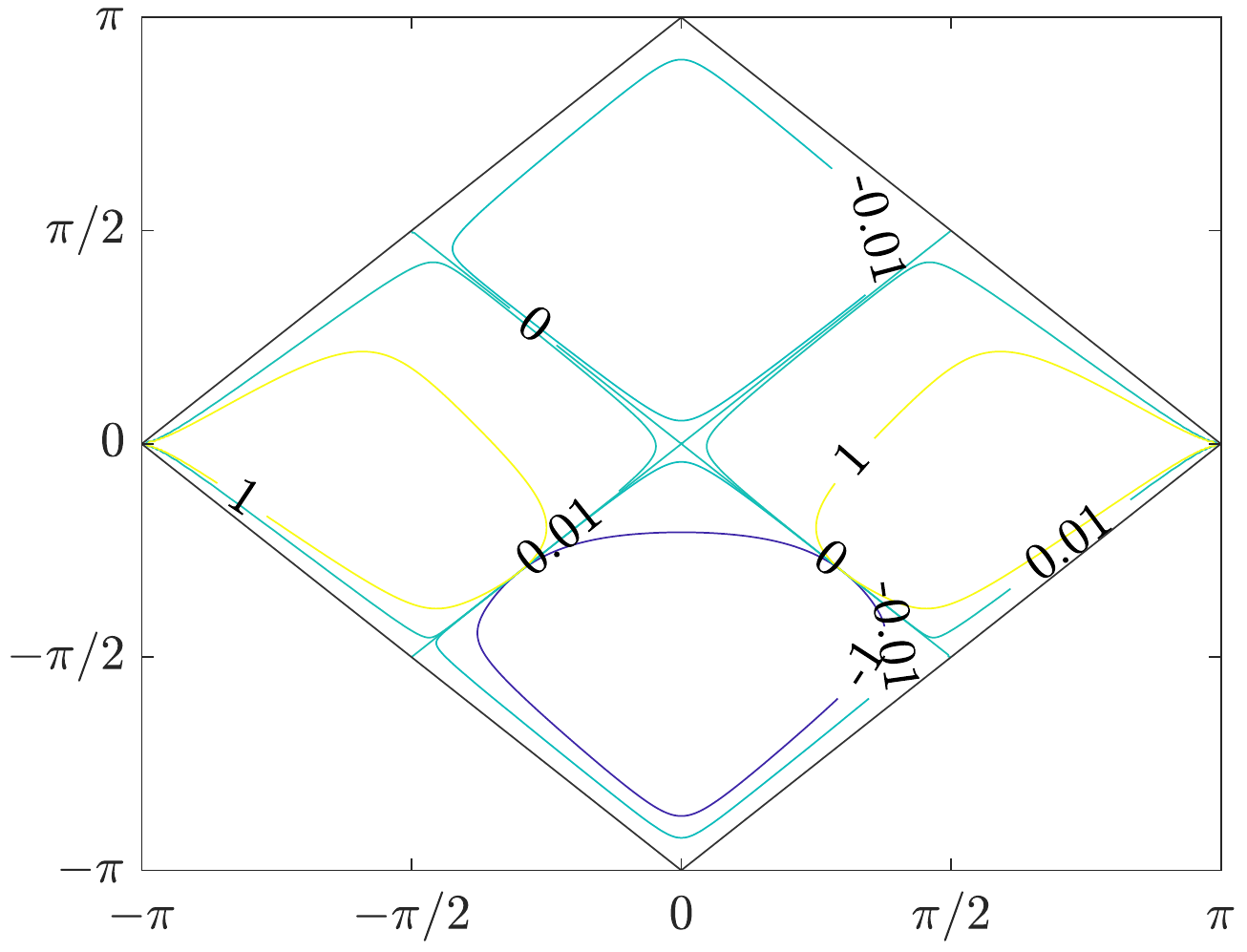}};
        \node at (0,2.7){$\Omega_2(t,x)(-t^2+x^2)=s^2$};
        \node at (0,-2.5){$x$};
        \node at (-3,0){$t$};
        \node at (1.5,1.6){\color{mygray} $\infty$};
        \node at (-1.2,1.4){\color{mygray} $\infty$};
        \node at (1.5,-1.3){\color{mygray} $\infty$};
        \node at (-1.2,-1.1){\color{mygray} $\infty$};
     \end{tikzpicture}
        {\caption*{(b) Penrose diagram of the space of Example \ref{Ex33}.}}
    \end{minipage}
    \caption{Example \ref{Ex33} lead us to the construction of the particular space (where the Ricci scalar curvature is $R=2$) shown both in standard coordinates (a) ad by the corresponding Penrose-Carter diagram (b).}
    \label{Figure_3}
\end{figure}

\begin{remark}
It is now possible, by varying parameters, to obtain all the possible Lorentzian space with constant Ricci scalar curvature and applying the same argument we used in this work, also their corresponding Penrose-Carter diagram.
\end{remark}

\section{Conclusions}
In this note, we provided a constructive way to develop a conformal characterization of $2$-dimensional Lorentzian manifolds with constant Ricci scalar curvature. The tools we used are very classical, but their application allow us to create a suggestive set of Lorentzian manifolds, starting from the conformal factor which, as we have shown, identifies them. Many useful examples have been proposed to explain the power of such construction. In the near future, we would like to investigate other similar applications using the same approach to characterize $2$-dimensional Lorentzian manifolds with a general (i. e. not constant) Ricci scalar curvature, or for the $3$-dimensional Einstein manifolds.

\par\bigskip\noindent
\textbf{Acknowledgments:} The authors would like to thank Daniele Taufer for the fruitful exchanges during the writing of this paper. NC and MS would like to thank the University of Pavia and the University of Trento, respectively,
for supporting their research.

\bibliographystyle{plain} \small
\bibliography{biblio}

\end{document}